\documentclass{amsart}
%

\usepackage{mathrsfs,amssymb}

\theoremstyle{plain}
\newtheorem{theorem}{Theorem}[section]
\newtheorem{lemma}[theorem]{Lemma}
\newtheorem{proposition}[theorem]{Proposition}
\newtheorem{corollary}[theorem]{Corollary}

\theoremstyle{definition}
\newtheorem{notation}[theorem]{Notation}
\newtheorem{example}[theorem]{Example}
\newtheorem{definition}[theorem]{Definition}

\theoremstyle{remark}
\newtheorem{remark}[theorem]{Remark}

\input xy
\xyoption{all}

\def\au{\mathcal{A}}

\begin{document}
\title[Hanna Neumann Property in Free Monoids]{A Sufficient Condition for Hanna Neumann Property of Submonoids of a Free Monoid}
\author[S. N. Singh, K. V. Krishna]{Shubh Narayan Singh and  K. V. Krishna}
\address{Department of Mathematics, Indian Institute of Technology Guwahati, Guwahati, India}
\email{\{shubh, kvk\}@iitg.ac.in}


\begin{abstract}
Using automata-theoretic approach, Giambruno and Restivo have investigated on the intersection of two finitely generated submonoids of the free monoid over a finite alphabet. In particular, they have obtained Hanna Neumann property for a special class of submonoids generated by finite prefix sets. This work continues their work and provides a sufficient condition for Hanna Neumann property for the entire class of submonoids generated by finite prefix sets. In this connection, a general rank formula for the submonoids which are accepted by semi-flower automata is also obtained.
\end{abstract}

\subjclass[]{68Q70, 68Q45, 20M35}

\keywords{Finitely generated monoids, semi-flower automata, rank, Hanna Neumann property.}

\maketitle

\section*{Introduction}

Howson proved that the intersection of two finitely generated subgroups of a free group  is finitely generated \cite{howson54}.
In 1956, Hanna Neumann improved the result that if $H$ and $K$ are finite rank subgroups of a free group, then
\[\widetilde{rk}(H \cap K) \le 2\widetilde{rk}(H)\widetilde{rk}(K),\] where $\widetilde{rk}(N) = \max(0, rk(N)-1)$ for a subgroup $N$ of rank $rk(N)$. Further, Neumann conjectured that\\

\hfill $\widetilde{rk}(H \cap K) \le \widetilde{rk}(H)\widetilde{rk}(K),$ \hfill ($\star$)\\

\noindent which is known as Hanna Neumann conjecture \cite{neu56}. In 1990, Walter Neumann proposed a stronger form of the conjecture called strengthened Hanna Neumann conjecture (SHNC) \cite{neu90}. Meakin and Weil proved SHNC for the class of positively generated subgroups of a free group \cite{meakin02}. The conjecture has recently been settled by Mineyev (cf. \cite{min11,min12}) and announced independently by Friedman (cf. \cite{fried-note,fried-arxiv}).

In contrast, it is not always true that the intersection of two finitely generated submonoids of a free monoid is finitely generated. It appears that the intersection problem for submonoids of free monoids is much more complex than the analogous problem for subgroups of free groups. In particular,  Hanna Neumann property for  submonoids of a free monoid is of special interest. Two finitely generated submonoids $H$ and $K$ of a free monoid are said to satisfy  \emph{Hanna Neumann property} (in short, HNP), if $H$ and $K$ satisfy the inequality ($\star$). There are several contributions in the literature to study the intersection of two submonoids of a free monoid.

In 1972, Tilson proved that the intersection of free submonoids of the free monoid over a finite alphabet is free \cite{til72}. In connection to HNP, Karhum\"aki obtained a result for submonoids of rank two of the free monoid over a finite alphabet. In fact, Karhum\"aki proved that the intersection of two submonoids of rank two is  generated either by a set of at most two words or by a regular language of a special form \cite{karhumaki84}. Using automata-theoretic approach, Giambruno and Restivo have investigated on the rank and HNP of certain submonoids of a free monoid \cite{giam08}.

In \cite{giam08}, Giambruno and Restivo introduced the concept called semi-flower automata (in short, SFA). An SFA accepts a finitely generated submonoid of the free monoid over the underlying alphabet, and vice versa. Moreover, if an SFA is deterministic, it accepts the submonoid generated by a finite prefix set. Conversely, the submonoid generated by a finite prefix set is accepted by a deterministic SFA with at most one `branch point going in' (in short, bpi). Using SFA, Giambruno and Restivo  have initiated the investigations on the intersection of two submonoids generated by finite prefix sets. It is clear that the product automaton of two deterministic SFA accepts the intersection of the submonoids accepted by the deterministic SFA.  If the product automaton is also semi-flower, then clearly the intersection is also finitely generated.  Giambruno and Restivo have considered two submonoids which are accepted by deterministic SFA with a unique bpi such that the product automaton is also semi-flower with at most one bpi. They have shown that such submonoids satisfy HNP. Further, if the product automaton has more than one bpi, they have provided some examples of submonoids which fail to satisfy HNP. Recently in \cite{singh11a,singh11}, Singh and Krishna have obtained a condition on the product automaton with two bpi's, so that the respective submonoids satisfy HNP.

The present work is in the direction of addressing HNP for the entire class of submonoids generated by finite prefix sets. This work generalizes the work of Singh and Krishna and provides a sufficient condition for HNP of two submonoids generated by finite prefix sets. Further, a general rank formula for the submonoids which are accepted by SFA is also obtained. The main work of the paper is presented in Section 2. Before that, in Section 1, the necessary preliminary concepts and results are presented. The paper is concluded in Section 3.

\section{Preliminaries}

In this section, we present the necessary background material from  \cite{bers85,giam07,giam08}. Let $A$ be a finite set called an \emph{alphabet} with its elements as \emph{letters}. The free monoid over $A$ is denoted by $A^*$ and $\varepsilon$ denotes the empty word -- the identity element of $A^*$. It is known that every submonoid of $A^*$ is generated by a unique minimal set of generators. Thus, if $H$ is a submonoid of $A^*$, then the \emph{rank} of $H$, denoted by $rk(H)$, is defined as the cardinality of the minimal set of generators $X$ of $H$, i.e. $rk(H) = |X|$. Further, the \emph{reduced rank} of $H$ is defined as $\max(0, rk(H)-1)$ and it is denoted by $\widetilde{rk}(H)$. A set of words, say $X$, is said to be a \emph{prefix set} if no element of $X$ is a proper prefix of another element in $X$.

Let $A$ be an alphabet. An \emph{automaton} over $A$ is a quadruple $(Q, I, T, \mathcal{F})$, where $Q$ is a finite set called the set of \emph{states}, $I$ and $T$ are subsets of $Q$ called the sets of \emph{initial} and \emph{final} states, respectively, and $\mathcal{F}\subseteq Q\times A\times Q$ called the set of \emph{transitions}. Clearly, by denoting the states as vertices/nodes and the transitions as labeled arcs, an automaton can be represented by a digraph in which initial and final states shall be distinguished appropriately.

In what follows, let $\au = (Q, I, T, \mathcal{F})$ be an automaton over $A$. A state, say $q$, of $\au$ is called a \emph{branch point going in}, in short \emph{bpi}, if the number of transitions coming into $q$ (i.e. the indegree of $q$ -- the number of arcs coming into $q$ -- in the digraph of $\au$) is at least two. In this work, we write $BPI(\au)$ to denote the set of bpi's of $\au$.  Further, for $i \ge 0$, we write
\[BPO_i(\au) = \{q \in Q\; |\; \mbox{ the number of transitions defined on $q$ is equal to }  i\},\] i.e. the set of states whose outdegree -- the number of arcs going out of the state -- in the digraph of $\au$ is $i$.

A \emph{path} in $\au$ is a finite sequence of consecutive arcs in its digraph. For $p_i \in Q$ ($0\le i \le k$) and $a_j \in A$ ($1 \le j \le k$), let
\[p_0 \xrightarrow{a_1} p_1 \xrightarrow{a_2} p_2 \xrightarrow{a_3} \cdots \xrightarrow{a_{k-1}} p_{k-1} \xrightarrow{a_k} p_k\] be a path, say $P$, in $\au$ that is starting at $p_0$ and ending at $p_k$. In this case, we write $i(P) =p_0$ and $f(P) = p_k$.  The word $a_1\cdots a_k \in A^*$ is the \emph{label of the path} $P$. A \emph{null path} is a path from a state to itself labeled by $\varepsilon$.

A path in $\au$ is called \emph{simple} if all the states on the path are distinct. A path that starts and ends at the same state is called as a \emph{cycle}, if it is not a null path. A cycle with all its states are distinct is called a \emph{simple cycle}. Other notions related to paths, viz. subpath, prefix path and suffix path, can be defined in a usual way or one may refer to \cite{giam08}.

The \emph{language accepted by $\au$}, denoted by $L(\au)$, is the set of words in $A^*$ that are the labels of the paths from an initial state to a final state. A state, say $q$, of $\au$ is \emph{accessible} (respectively, \emph{coaccessible}) if there is a path from an initial state to $q$ (respectively, a path from $q$ to a final state). The \emph{trim part} of $\au$, denoted by $\au^T$, is the automaton obtained from $\au$ by considering only the accessible and coaccessible states, and the respective transitions between them. Note that $L(\au) = L(\au^T)$. If $\au = \au^T$, then we say $\au$ is \emph{trim}. An automaton is \emph{deterministic} if it has a unique initial state and there is at most one transition defined for a state and a letter, i.e. the transition relation can be seen as a partial function from $Q \times A$ to $Q$.

An automaton is called a \emph{monoidal automaton} if it is trim with a unique initial state that is equal to a unique final state. Further, a monoidal automaton is called a \emph{semi-flower automaton}, in short SFA, if all the cycles in the automaton visit the unique initial-final state.

If $\au = (Q, I, T, \mathcal{F})$ is a monoidal automaton, we denote the initial-final state by $q_0$. In which case, we simply write $\au = (Q, q_0, q_0, \mathcal{F})$.  Further, if $\au$ is an SFA, let us denote by $C_{\au}$ the set of simple cycles (passing through $q_0$) in $\au$ and by $Y_{\au}$ the set of their labels.

In the following theorem we state the correspondence between SFA and finitely generated submonoids of a free monoid.

\begin{theorem}[\cite{giam08}]\label{thm1}
If $\au$ is an SFA over $A$, then $Y_{\au}$ is finite and $\au$ accepts the submonoid generated by $Y_{\au}$ in $A^*$. Moreover, if $\au$ is deterministic, then $Y_{\au}$ is a prefix set and it is the minimal set of generators of the submonoid accepted by $\au$. Conversely, let $X$ be a finite subset of $A^*$ and let $H$ be the submonoid generated by $X$; then there exists an SFA accepting $H$. Furthermore, if $X$ is a prefix set, then there exists a deterministic SFA with at most one bpi accepting $H$.
\end{theorem}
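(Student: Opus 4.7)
The plan is to establish the five assertions in the statement — finiteness of $Y_{\au}$, the equality $L(\au)=\langle Y_{\au}\rangle$, the prefix-set property and minimality in the deterministic case, the existence of an SFA for a finite $X$, and the refinement to a deterministic SFA with at most one bpi when $X$ is prefix — one at a time, since they are largely independent.

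First, finiteness of $Y_{\au}$ is immediate from the fact that $\au$ has finitely many states, so there are only finitely many simple cycles through $q_0$, and hence finitely many labels. Next, to show $L(\au)=Y_{\au}^*$, the inclusion $Y_{\au}^{*}\subseteq L(\au)$ is clear since every $y\in Y_{\au}$ labels a cycle at $q_0$, which is both initial and final. Conversely, let $w\in L(\au)$ label a path $P$ from $q_0$ to $q_0$. By the semi-flower hypothesis, every cycle of $\au$ visits $q_0$, so each time $P$ revisits $q_0$ we may cut it, decomposing $P$ into consecutive cycles at $q_0$. Each such cycle can be further decomposed into simple cycles through $q_0$ by induction on the number of repeated intermediate states, and the labels of these simple cycles lie in $Y_{\au}$; hence $w\in Y_{\au}^*$.

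For the deterministic case, I would argue that $Y_{\au}$ is a prefix set by contradiction: if $u$ labels the simple cycle $C_u\in C_{\au}$ and $v=uw$ (with $w\neq\varepsilon$) labels another simple cycle $C_v\in C_{\au}$, then determinism forces the initial segment of $C_v$ labeled by $u$ to coincide (as a path) with $C_u$, so $C_v$ passes through $q_0$ strictly before its end, contradicting simplicity of $C_v$. Minimality of $Y_{\au}$ as a generating set then follows because any $y\in Y_{\au}$ that could be written as a nontrivial product $z_1\cdots z_k$ with each $z_i\in Y_{\au}\setminus\{y\}$ would label a path through $q_0$ strictly between its endpoints, again contradicting that $y$ labels a simple cycle.

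Finally, for the converse, given a finite $X=\{x_1,\dots,x_n\}\subseteq A^*$, I construct a \emph{bouquet} SFA: take a single initial-final state $q_0$ and, for each $x_i$, attach a simple cycle at $q_0$ labeled by $x_i$ using fresh intermediate states for distinct $x_i$. The resulting automaton is trim, has only cycles through $q_0$, and manifestly accepts $X^{*}$. When $X$ is a prefix set, I replace this by a shared-prefix construction: build the prefix tree of $X$ rooted at $q_0$, and then for each $x_i=a_1\cdots a_k$ redirect the final transition (originally entering the leaf representing $x_i$) so that it enters $q_0$. Determinism is inherited from the tree structure, and the prefix-set hypothesis guarantees that the redirection identifies only the leaf ends with $q_0$, so the only state with indegree $\geq 2$ is $q_0$ itself. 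The main subtlety I expect will be in this last step: one must check that two distinct $x_i$'s cannot already agree on a proper prefix in such a way that an internal node becomes a branch point going in — but by the prefix-tree construction all branching goes \emph{out} of internal nodes, not \emph{in}, so only $q_0$ collects multiple incoming arcs.
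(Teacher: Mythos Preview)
The paper does not prove this theorem at all: it is quoted from \cite{giam08} as a preliminary result and stated without proof. So there is no ``paper's own proof'' to compare against, and your sketch must stand on its own. It does: each of the five assertions is handled correctly and along the standard lines one finds in \cite{giam08} and \cite{bers85}.

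Two small clean-ups are worth making. First, in the decomposition of a successful path into simple cycles, once you cut at every visit to $q_0$ the pieces are \emph{already} simple: if some piece repeated an intermediate state $p\ne q_0$, the subpath between the two occurrences of $p$ would be a cycle avoiding $q_0$, contradicting the SFA hypothesis. Your ``further decomposition by induction on repeated intermediate states'' is therefore vacuous and can be dropped. Second, in the minimality argument you restrict to $z_i\in Y_{\au}\setminus\{y\}$; this is unnecessary (and strictly speaking not what minimality asks). The correct statement is that $y$ admits no factorisation $y=z_1\cdots z_k$ with $k\ge 2$ and $z_i\in Y_{\au}$; since $\varepsilon\notin Y_{\au}$, length considerations already rule out any $z_i=y$, so the distinction is harmless but should be phrased properly. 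The prefix-tree construction for the final clause is exactly the standard one and your observation that all branching in the tree is outgoing (so only $q_0$ can be a bpi) is the right point.
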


We now present the concept of product automaton and discuss the state of the art of the intersection problem of two submonoids generated by finite prefix sets.

Let $\au_1 = (Q_1, I_1, T_1, \mathcal{F}_1)$ and $\au_2 = (Q_2, I_2, T_2, \mathcal{F}_2)$ be two automata defined over $A$. The \emph{product automaton}  $\au_1  \times \au_2$ is the automaton $$(Q_1 \times Q_2, I_1 \times I_2, T_1 \times T_2, \mathcal{F})$$ over the alphabet $A$ such that
\[((p, p'), a, (q, q')) \in \mathcal{F} \Longleftrightarrow (p, a, q) \in \mathcal{F}_1 \; \mbox{ and } (p', a, q' ) \in \mathcal{F}_2\] for all $p, q \in Q_1$, $p', q' \in Q_2$ and $a \in A$.

Notice that, if $\au_1$ and $\au_2$ are deterministic, then so is $\au_1 \times \au_2$. But, if $\au_1$ and $\au_2$ are trim, then $\au_1 \times \au_2$ need not be trim. Further,
\[L(\au_1 \times \au_2) = L((\au_1 \times \au_2)^T) = L(\au_1) \cap L(\au_2).\]

Let $H$ and $K$ be submonoids generated by finite prefix sets of words over $A$. In view of Theorem \ref{thm1}, suppose $\au_H$ and $\au_K$ are  deterministic SFA over $A$ with at most one bpi accepting $H$ and $K$, respectively. Clearly, $(\au_H \times \au_K)^T$ is a deterministic monoidal automaton accepting $H \cap K$. In order to consider the case that $H \cap K$ is finitely generated, one could restrict $(\au_H \times \au_K)^T$  to be semi-flower. With this hypothesis, we discuss on HNP of $H$ and $K$ as follows.

\begin{description}
  \item[Case 1. $\au_H$ or $\au_K$ has no bpi's] In this case, $(\au_H \times \au_K)^T$ has no bpi's. Consequently, $H \cap K$ is cyclic, so that $H$ and $K$ satisfy HNP.
  \item[Case 2. $\au_H$ and $\au_K$ have unique bpi] In this case, $(\au_H \times \au_K)^T$ can have arbitrary number of bpi's. Thus, the problem is considered into various subcases and we only know the following.
      \begin{description}
      \item[I. $(\au_H \times \au_K)^T$ has at most one bpi] In this subcase, if $(\au_H \times \au_K)^T$ is semi-flower, then $H$ and $K$ satisfy HNP (cf. \cite[Theorem 3.6]{giam08}).
      \item[II. $(\au_H \times \au_K)^T$ has two bpi's] In this subcase, even if $(\au_H \times \au_K)^T$ is semi-flower, $H$ and $K$ need not satisfy HNP (cf. \cite[Example 3.7]{singh11}). However, if $(\au_H  \times \au_K)^T$ is an SFA with two bpi's having a unique path from one bpi to the other, then $H$ and $K$ satisfy HNP (cf. \cite[Corollary 3.11]{singh11}).
      \end{description}
\end{description}
In general, if $(\au_H  \times \au_K)^T$ has more than one bpi, there are several examples of $H$ and $K$ which fail to satisfy HNP (cf. \cite{giam07,giam08}). Thus, if $(\au_H  \times \au_K)^T$ has arbitrary number of bpi's, in this work we would investigate on certain conditions so that $H$ and $K$ satisfy HNP. For that purpose, we would require the following supplementary results from \cite{giam08}.

\begin{proposition}\label{result-1}
Let $A$ be an alphabet of cardinality $n$. If $\au = (Q, q_0, q_0, \mathcal{F})$ is a deterministic SFA over $A$, then
\[|\mathcal{F}| - |Q| = \displaystyle\sum_{i = 2}^n |BPO_i(\au)|(i - 1).\]
\end{proposition}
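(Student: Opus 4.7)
The plan is to prove the identity by a straightforward double-counting argument on the outdegrees of states of $\au$, with one genuine lemma about the structure of an SFA. Since $\au$ is deterministic, the outdegree of every state lies in $\{0, 1, \ldots, n\}$, so the sets $BPO_0(\au), BPO_1(\au), \ldots, BPO_n(\au)$ partition $Q$. Counting states and transitions according to this partition gives
\[|Q| = \sum_{i=0}^{n} |BPO_i(\au)| \qquad \text{and} \qquad |\mathcal{F}| = \sum_{i=0}^{n} i \cdot |BPO_i(\au)|.\]
Subtracting, and noting that the $i = 1$ summand vanishes, one obtains
\[|\mathcal{F}| - |Q| \;=\; \sum_{i=0}^{n}(i-1)\,|BPO_i(\au)| \;=\; -|BPO_0(\au)| + \sum_{i=2}^{n}(i-1)\,|BPO_i(\au)|.\]

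Thus the proposition reduces to the single assertion $BPO_0(\au) = \emptyset$. This is where the SFA hypothesis enters. I would argue as follows: let $q \in Q$ have outdegree $0$. Because $\au$ is monoidal, it is trim, and in particular $q$ is coaccessible; so there exists a path from $q$ to the unique final state $q_0$. Since no transition leaves $q$, this path must be the null path, forcing $q = q_0$. To exclude $q = q_0$, one invokes accessibility: if $q_0$ itself had no outgoing transition, then no state of $Q \setminus \{q_0\}$ could be reached from $q_0$, and trimness would collapse $\au$ to the one-state automaton with no transitions. Under the (implicit) non-degeneracy of the SFA — which is needed anyway, since in the collapsed case the formula would read $-1 = 0$ — we conclude $BPO_0(\au) = \emptyset$, completing the proof.

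The argument is essentially bookkeeping once the lemma $BPO_0(\au) = \emptyset$ is in hand, and that lemma follows cleanly from the definition of a trim monoidal automaton. The main (and essentially only) obstacle is disposing of the outdegree-zero case at the initial-final state $q_0$; this is handled by the accessibility half of trimness together with the tacit non-triviality of the SFA. Everything else is the standard handshake-type manipulation, and no use of the SFA condition on cycles visiting $q_0$ is required beyond what is already encoded in monoidality.
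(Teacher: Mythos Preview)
The paper does not supply its own proof of this proposition: it is quoted verbatim from Giambruno and Restivo \cite{giam08} as a preliminary result, so there is no in-paper argument to compare against. Your outdegree double-counting argument is the natural one and is correct; in particular, the reduction to $BPO_0(\au)=\emptyset$ via coaccessibility (forcing any sink to be $q_0$) and accessibility (forcing $Q=\{q_0\}$ if $q_0$ is a sink) is exactly right. You are also correct that the one-state, zero-transition automaton is formally an SFA under the paper's definitions and makes the identity read $-1=0$; this degenerate case is tacitly excluded throughout the paper (and in \cite{giam08}), so flagging it as an implicit hypothesis is appropriate rather than a gap in your argument.
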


\begin{proposition}\label{result-2}
Let $A$ be an alphabet of cardinality $n$ and let $\au_1$ and $\au_2$ be two deterministic automata over $A$. If
$c_i = |BPO_i(\au_1)|$ and $d_i = |BPO_i(\au_2)|$, for each $i \in \{1, \ldots, n\}$, then
\[|BPO_t(\au_1\times \au_2)|\leq\displaystyle\sum_{t\leq r,s\leq n}c_r d_s.\]
\end{proposition}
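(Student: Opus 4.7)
The plan is to compare the outdegree of a state in the product automaton with the outdegrees of its two coordinate states, exploiting determinism.

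First I would set up notation. For a deterministic automaton $\au$ over $A$ and a state $p$ of $\au$, write $A(p) = \{a \in A : \delta(p,a) \text{ is defined}\}$, so that the outdegree of $p$ equals $|A(p)|$ and $p \in BPO_i(\au)$ precisely when $|A(p)| = i$. The definition of the product automaton gives immediately that for $(p, p') \in Q_1 \times Q_2$,
\[
A\bigl((p,p')\bigr) = A(p) \cap A(p'),
\]
since a transition on letter $a$ from $(p,p')$ in $\au_1 \times \au_2$ exists if and only if both $\delta_1(p,a)$ and $\delta_2(p',a)$ are defined. Because $\au_1, \au_2$ are deterministic, $\au_1 \times \au_2$ is deterministic too, and its outdegree function is just $|A(\cdot)|$.

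Next, the key monotonicity step: if $(p, p') \in BPO_t(\au_1 \times \au_2)$, then $t = |A(p) \cap A(p')| \le \min\{|A(p)|, |A(p')|\}$, so $p$ has outdegree at least $t$ in $\au_1$ and $p'$ has outdegree at least $t$ in $\au_2$. Thus
\[
BPO_t(\au_1 \times \au_2) \subseteq \Bigl(\bigcup_{r=t}^{n} BPO_r(\au_1)\Bigr) \times \Bigl(\bigcup_{s=t}^{n} BPO_s(\au_2)\Bigr).
\]
Since the sets $BPO_i(\cdot)$ are pairwise disjoint for distinct $i$, taking cardinalities yields
\[
|BPO_t(\au_1 \times \au_2)| \le \Bigl(\sum_{r=t}^{n} c_r\Bigr)\Bigl(\sum_{s=t}^{n} d_s\Bigr) = \sum_{t \le r, s \le n} c_r d_s,
\]
which is exactly the stated inequality.

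There is no real obstacle here; the whole content is the identity $A((p,p')) = A(p)\cap A(p')$, which needs determinism to make $|A(\cdot)|$ equal the outdegree. The inequality is strict in general precisely because the intersection of $A(p)$ and $A(p')$ can be much smaller than either set, which is why the bound is only an upper bound and not an equality.
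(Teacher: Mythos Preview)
Your argument is correct. Note, however, that the paper does not supply its own proof of this proposition: it is stated in the preliminaries as a result quoted from Giambruno and Restivo \cite{giam08}, so there is no in-paper proof to compare against. Your proof is the standard one and is exactly what one would expect the original to be: the identity $A((p,p')) = A(p)\cap A(p')$ for deterministic automata immediately forces the outdegree of $(p,p')$ to be at most $\min(|A(p)|,|A(p')|)$, giving the claimed inclusion and bound.
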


\begin{proposition}\label{result-3}
Let  $\langle c_1,\ldots,c_n \rangle$ and $\langle d_1,\ldots,d_n \rangle$ be two finite sequences of natural numbers; then
\[\sum_{t = 2}^n(t-1)\left(\sum_{t\leq r\leq n}c_r \sum_{t\leq s\leq n} d_s \right) \leq
\left(\sum_{i = 2}^n (i-1)c_i\right)\left(\sum_{j = 2}^n (j-1)d_j\right).\]
\end{proposition}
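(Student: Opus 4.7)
The plan is to reduce the inequality to a pointwise comparison of coefficients after expanding both sides as double sums indexed by pairs $(i,j)$ with weight $c_i d_j$. First, I would rewrite the right-hand side in the obvious way as
\[
\left(\sum_{i=2}^n(i-1)c_i\right)\left(\sum_{j=2}^n(j-1)d_j\right) = \sum_{i=2}^n\sum_{j=2}^n (i-1)(j-1)\,c_i d_j.
\]
For the left-hand side, I would expand $\sum_{t\leq r\leq n}c_r\sum_{t\leq s\leq n}d_s$ as $\sum_{i=t}^n\sum_{j=t}^n c_i d_j$ and then swap the order of summation so that $t$ becomes the inner index ranging over $2\leq t\leq \min(i,j)$. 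Using the identity $\sum_{t=2}^m(t-1)=\binom{m}{2}$ (which is valid since $i,j\geq 2$ forces $\min(i,j)\geq 2$), this turns the left-hand side into
\[
\sum_{i=2}^n\sum_{j=2}^n \binom{\min(i,j)}{2}\, c_i d_j.
\]

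With both sides now written as nonnegative linear combinations of the same terms $c_i d_j$, the proposition reduces to the purely combinatorial inequality
\[
\binom{\min(i,j)}{2} \leq (i-1)(j-1) \qquad \text{for all } i,j\geq 2.
\]
I would verify this by assuming without loss of generality that $i\leq j$, so that $\binom{\min(i,j)}{2}=\tfrac{i(i-1)}{2}$ while $(i-1)(j-1)\geq (i-1)^2$; the bound is then equivalent to $\tfrac{i}{2}\leq i-1$, which holds precisely for $i\geq 2$.

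The substantive content is thus concentrated in the index-swap step; the main obstacle is keeping the summation limits straight, in particular noting that the constraint $t\geq 2$ from the outer sum combined with $t\leq i$ and $t\leq j$ gives exactly the triangular region $2\leq t\leq \min(i,j)$, and that this region is nonempty for every pair $(i,j)$ with $i,j\geq 2$. Once the rearrangement is executed carefully, no further idea is needed beyond the elementary pointwise bound above.
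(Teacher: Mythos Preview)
Your argument is correct: the index-swap reduces the left side to $\sum_{i,j\ge 2}\binom{\min(i,j)}{2}c_id_j$, and the pointwise bound $\binom{\min(i,j)}{2}\le(i-1)(j-1)$ for $i,j\ge2$ then yields the inequality term by term. Note, however, that the paper does not supply its own proof of this proposition; it is quoted without proof as a supplementary result from Giambruno and Restivo, so there is no in-paper argument to compare against.
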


\section{Main Results}

In this section we generalize the work of Singh and Krishna in \cite{singh11}. First we obtain the rank of the submonoid of a free monoid that is accepted by an SFA. Then we proceed to obtain a condition for HNP of two submonoids of a free monoid which are generated by finite prefix sets.

\subsection{BPR and Rank}

We begin with introducing a concise notation for an SFA in which only the initial-final state, bpi's and the respective paths between them will be considered along with their labels. We call this as \emph{\underline{b}pi's and \underline{p}aths \underline{r}epresentation}, in short BPR, of the semi-flower automaton.

\begin{definition}
Let $\au = (Q, q_0, q_0, \mathcal{F})$ be an SFA over $A$; the BPR of $\au$ is a quadruple $\au' = (Q', q_0, q_0, \mathcal{F}')$, where
\begin{enumerate}
\item[(i)] $Q' = BPI(\au) \cup \{q_0\}$, and

\item[(ii)] $\mathcal{F}'$ is the finite subset of $Q' \times A^* \times Q'$ defined by
$(p_0 = p, x, q = p_k) \in \mathcal{F}'$ if and only if there exist distinct  $p_1, \ldots, p_{k-1} \in Q \setminus Q'$ and $x = a_1\cdots a_k$, for $a_i \in A$, such that $(p_{i-1}, a_i, p_i) \in \mathcal{F}$ for all $1 \le i \le k$, i.e.
\[p = p_0 \xrightarrow{a_1} p_1 \xrightarrow{a_2} p_2 \xrightarrow{a_3} \cdots \xrightarrow{a_{k-1}} p_{k-1} \xrightarrow{a_k} p_k = q\] is a simple path from $p$ to $q$ (or simple cycle, when $p = q$) labeled by $x$ in which the intermediate nodes, if any, are outside $Q'$.
\end{enumerate}
\end{definition}

\begin{remark}
By adopting the digraph representation of an automaton, we can draw a digraph for the BPR of an SFA. Here, the arcs are labeled by the labels (words) of respective simple paths (or simple cycles) of the SFA.
\end{remark}

\begin{example}
The BPR of the SFA given in \textsc{Figure} \ref{fig3} is shown in \textsc{Figure} \ref{fig4}. Here, we distinguish the initial-final state $q_0$ by double circles.
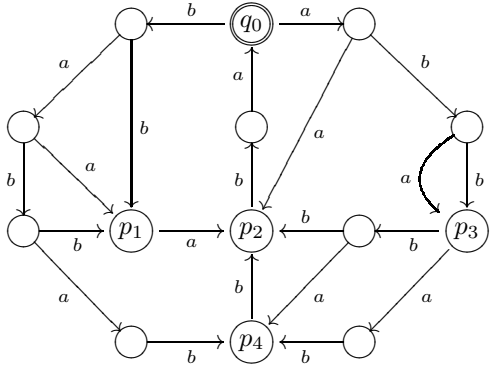
\begin{figure}[t]
\entrymodifiers={++[o][F-]} \SelectTips{cm}{}
\[\xymatrix{*\txt{} & \ar[ld]_a \ar[dd]^b & *++[o][F=]{q_0} \ar[r]^a \ar[l]_b  & \ar[rd]^b \ar[ddl]^a \\
\ar[d]_b \ar[dr]^a & *\txt{} & \ar[u]^a & *\txt{} & \ar[d]^b \ar@/_1.5pc/[d]_a \\
\ar[r]_b \ar[rd]_a & p_1 \ar[r]_a & p_2 \ar[u]^b & \ar[l]_b \ar[ld]^a & \ar[dl]^a \ar[l]^b p_3 \\
*\txt{} & \ar[r]_b & \ar[u]^b p_4  & \ar[l]^b \\
}\]
\caption{A Semi-Flower Automaton}
\label{fig3}
\end{figure}

\begin{figure}[t]
\entrymodifiers={++[o][F-]} \SelectTips{cm}{}
\[\xymatrix{ *\txt{} & *\txt{} &  *++[o][F=]{q_0} \ar[rrdd]^{aba} \ar@/^2pc/[rrdd]^{abb} \ar[dddr]^{babab} \ar[ddll]_{bb} \ar@/_2pc/[ddll]_{baa} \ar@/_4pc/[ddll]_{babb} \ar@/^1.5pc/[dddl]^{aa}  \\
*\txt{}\\
p_1\ar[rd]_a & *\txt{} &*\txt{} &*\txt{} & p_3 \ar[ld]^{ab} \ar@/_1.5pc/[ld]_{ba} \ar@/^5pc/[dlll]_{bb}\\
*\txt{} & p_2 \ar[uuur]^{ba} &*\txt{} & p_4 \ar[ll]^b }\]
\caption{The BPR of the SFA given in \textsc{Figure} \ref{fig3}}
\label{fig4}
\end{figure}
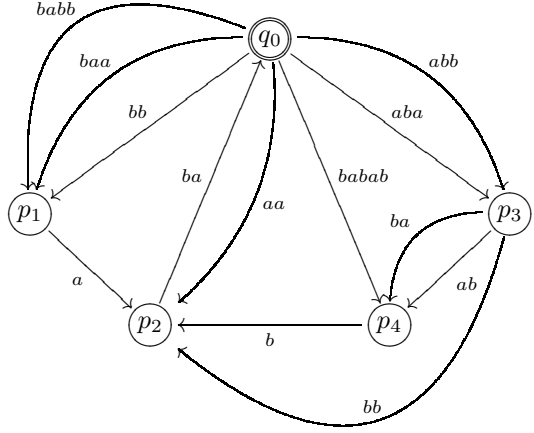
\end{example}

\begin{remark}\label{cyc-bpr} Let $\au = (Q, q_0, q_0, \mathcal{F})$ be an SFA and let $\au' = (Q', q_0, q_0, \mathcal{F}')$ be the BPR of $\au$.
\begin{enumerate}
\item[(i)] Every cycle in $\au'$ passes through the state $q_0$.
\item[(ii)] The number of simple cycles in $\au$ is equal to the number of simple cycles in $\au'$.
\item[(iii)] For any $p, q \in Q'$, the number simple paths from $p$ to $q$ in $\au$ is equal to the number of simple paths from $p$ to $q$ in $\au'$.
\end{enumerate}
\end{remark}

\begin{proposition}\label{top-ord}
Let $\au$ be an SFA and let $\au'$ be the BPR of $\au$. There is a linear ordering $\preccurlyeq$  on the states of $\au'$ such that
\begin{enumerate}
\item[(i)] the initial-final state $q_0$ is the least element, and
\item[(ii)] for $j \ne 0$, if $q_j \preccurlyeq q_i$, then there is no arc from $q_j$ to $q_i$ in $\au'$.
\end{enumerate}
\end{proposition}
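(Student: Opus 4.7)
The plan is to apply a standard topological sort to the subdigraph of $\au'$ obtained by deleting $q_0$; the acyclicity of this subdigraph comes for free from Remark~\ref{cyc-bpr}(i), and $q_0$ is then inserted as the bottom element.

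First I would let $D$ denote the subdigraph of $\au'$ with vertex set $V := Q' \setminus \{q_0\}$ and arc set $E := \{(p, q) \in \mathcal{F}' : p, q \in V\}$. Any directed cycle of $D$ would be a cycle of $\au'$ not passing through $q_0$, contradicting Remark~\ref{cyc-bpr}(i). Hence $D$ is a finite directed acyclic graph, and in particular admits a sink: otherwise one could iterate along outgoing arcs from an arbitrary vertex and, by finiteness, revisit a vertex, producing a cycle.

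Second, I would peel off sinks inductively. Choose a sink $s_1$ of $D$, then a sink $s_2$ of $D$ restricted to $V \setminus \{s_1\}$, and so on, obtaining an enumeration $s_1, s_2, \ldots, s_m$ of $V$. The choice of $s_l$ as a sink of $D$ restricted to $\{s_l, s_{l+1}, \ldots, s_m\}$ says precisely that for every arc $(s_l, s_k) \in E$, one has $k < l$.

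Finally, I would define the linear order $\preccurlyeq$ on $Q'$ by
\[
q_0 \prec s_1 \prec s_2 \prec \cdots \prec s_m.
\]
Condition (i) is immediate. For (ii), suppose $j \ne 0$ and $q_j \preccurlyeq q_i$. Since $q_j \ne q_0$ and $q_0$ is strictly least, necessarily $q_i \ne q_0$, so $q_j = s_l$ and $q_i = s_k$ with $l \le k$. If $\au'$ had an arc from $q_j$ to $q_i$, it would lie in $E$, forcing $k < l$ by the previous paragraph, contradicting $l \le k$. Hence no such arc exists.

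No step poses a real obstacle; the only subtlety is to orient the topological sort so that sinks of $D$ sit at the \emph{bottom} of $\preccurlyeq$, which is what makes arcs among non-$q_0$ states point from larger to smaller, as required by (ii).
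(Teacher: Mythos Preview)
Your proof is correct and follows essentially the same approach as the paper: both arguments observe that removing the influence of $q_0$ from $\au'$ yields a directed acyclic graph (you delete $q_0$ outright, the paper deletes only the arcs leaving $q_0$) and then perform a topological sort. The paper phrases the sort abstractly via the partial-order extension theorem, using coaccessibility to see that $q_0$ is already the least element, whereas you peel off sinks explicitly and append $q_0$ at the bottom by fiat; these are equivalent renderings of the same idea.
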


\begin{proof}
Construct the digraph $\mathcal{G}$ from the digraph of $\au'$ by removing the arcs which are leaving out of the initial-final state $q_0$. By Remark \ref{cyc-bpr}(i), $\mathcal{G}$ is a directed acyclic graph. Define a relation $\le$ on the nodes of $\mathcal{G}$ by \[p \le q\; \mbox{if and only if there is a simple path from $q$ to $p$ in $\mathcal{G}$.}\]
As the null path is a simple path from a node to itself in $\mathcal{G}$, clearly $\le$ is reflexive. Since there are no cycles in $\mathcal{G}$, it can be observed that  $\le$ is anti-symmetric and transitive. Thus, the relation $\le$ is a partial ordering on the nodes of $\mathcal{G}$. Since every state in $\au$ is coaccessible, the initial-final state $q_0$ is the least element with respect to $\le$.

As every partial ordering can be extended to a linear ordering, consider a linear ordering $\preccurlyeq$ of the nodes of $\mathcal{G}$ which is an extension of $\le$.  Thus, the obtained linear ordering $\preccurlyeq$ is the desired one.
\end{proof}

\begin{remark}
By applying a topological sort algorithm (eg. refer \cite{cor01}) on the directed acyclic graph $\mathcal{G}$, one can get a linear ordering as described in Proposition \ref{top-ord}.
\end{remark}

In what follows, by a \emph{topological ordering} of the bpi's of an SFA is meant a linear ordering on the states (possibly, except the initial-final state) of its BPR as in Proposition \ref{top-ord}.

\begin{example}
Let $\au$ be the SFA given in \textsc{Figure} \ref{fig3}. A topological ordering of the bpi's of $\au$ is \[p_2, p_4, p_3, p_1.\] Notice that there is no arc from $p_3$ or $p_4$ to $p_1$ in the BPR of $\au$ (cf. \textsc{Figure} \ref{fig4}). Hence, in a topological ordering of the bpi's of $\au$, the bpi $p_1$ can come at any position after $p_2$. Thus, the possible other topological orderings are $p_2, p_1, p_4, p_3$ and $p_2, p_4, p_1, p_3$.
\end{example}

\begin{notation}
Let $\au$ be an SFA and let $\au'$ be the BPR of $\au$. If we say that $\au$ has $m$ bpi's, we always assume that $q_1, q_2, \ldots, q_m$ are the bpi's of $\au$, which are considered in a topological ordering. That is, $$q_1 \preccurlyeq q_2 \preccurlyeq \cdots \preccurlyeq q_m.$$  As per the ordering, we also fix the following numbers in the BPR $\au'$.
\begin{enumerate}
\item[(i)] For $1 \le i \le m$, $\kappa_i$ always refer to the number of arcs from the state $q_0$ to the bpi $q_i$ in $\au'$.
\item[(ii)] For $1 \le i, j \le m$, $\kappa_{ij}$ always refer to the number of arcs from the bpi $q_i$ to the bpi $q_j$ in $\au'$.
\end{enumerate}
\end{notation}

\begin{remark}
As per the topological ordering of the bpi's of $\au$, we have $\kappa_{ij} = 0$ for all $j \ge i > 1$. If the initial-final state $q_0$ of $\au$ is a bpi, then clearly $q_1 = q_0$ so that, for $j \ge 1$, $\kappa_{1j} = \kappa_j$; otherwise, $\kappa_{1j} = 0$.
\end{remark}

\begin{remark}\label{indeg-bpi}
In the digraph of $\au$ (as well as in $\au'$), the indegree of a bpi $q_j$, $1 \le j \le m$, is given by the expression \[ \kappa_j + \displaystyle\sum_{i = j + 1}^m \kappa_{ij}.\]
\end{remark}

The following lemma is useful in obtaining the rank of an SFA.

\begin{lemma}\label{lem-rk}
Let $\au$ be an SFA and let $p$ be the first bpi in a topological ordering of the bpi's of $\au$, i.e. $p \preccurlyeq q$, for all bpi's $q$; then
\begin{enumerate}
\item[\rm(i)] there is a unique simple path from $p$ to the initial-final state $q_0$, and
\item[\rm(ii)] every cycle in $\au$ visits $p$.
\end{enumerate}
\end{lemma}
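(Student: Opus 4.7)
The plan is to handle the degenerate case $p = q_0$ trivially and then exploit a ``backward chain'' from $q_0$ in the main case. If $p = q_0$ (equivalently, $q_0$ is itself a bpi), then (i) is witnessed by the null path, since any non-null path from $q_0$ back to $q_0$ repeats $q_0$ and so is not simple, and (ii) is immediate from $\au$ being an SFA.

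Suppose now $p \neq q_0$, so $q_0$ is not a bpi and thus has indegree one in $\au$. Let $r_1$ be its unique predecessor; as long as $r_i$ is not a bpi, it also has indegree one, and I take $r_{i+1}$ to be its unique predecessor. By finiteness this inductive construction must terminate, either at a bpi $r_l$ or by closing back to $q_0$. In the latter case, the only incoming route to $q_0$ lies along a non-bpi cycle, and any putative bpi elsewhere in $\au$ would fail to possess a simple path to $q_0$ (such a path would be forced to revisit $q_0$); by trimness no bpi then exists and the lemma's hypothesis is vacuous. So the chain reaches a bpi $r_l$, and I claim $r_l = p$. For any bpi $q$, coaccessibility combined with the uniqueness of predecessors along the chain forces every simple path from $q$ to $q_0$ in $\au$ to terminate with $r_l \to r_{l-1} \to \cdots \to r_1 \to q_0$. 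The prefix of this simple path from $q$ to $r_l$ avoids $q_0$ and therefore lifts to a path from $q$ to $r_l$ in the DAG $\mathcal{G}$ from the proof of Proposition~\ref{top-ord}; hence $r_l \le q$ in the associated partial order, which makes $r_l$ the minimum bpi and so first in every topological extension.

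Claim (ii) is then immediate: every simple cycle in $\au$ passes through $q_0$, is forced to end with $r_l \to r_{l-1} \to \cdots \to r_1 \to q_0$, and therefore visits $p = r_l$; a non-simple cycle decomposes into simple sub-cycles at $q_0$ (using the SFA property), each of which visits $p$. For (i), I argue inductively that each $r_i$ has a unique outgoing transition in $\au$, namely $r_i \to r_{i-1}$ (with the convention $r_0 := q_0$): if $r_i$ had an extra outgoing edge $r_i \to y$, then $y$ would be coaccessible and, by the chain argument applied again, its simple path to $q_0$ must pass through $r_i$; splicing $r_i \to y$ with the prefix of that simple path from $y$ to $r_i$ yields a cycle at $r_i$ avoiding $q_0$, contradicting the SFA property. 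Hence the unique simple path from $p$ to $q_0$ is $p = r_l \to r_{l-1} \to \cdots \to r_1 \to q_0$. The main delicate step is the identification $r_l = p$ via the coaccessibility/chain argument; once that is in place, the SFA property mechanically yields both the uniqueness of outgoing transitions and the containment of $p$ in every cycle.
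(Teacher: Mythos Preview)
Your argument is correct. It is, however, a genuinely different route from the paper's. The paper never builds the backward chain $q_0 \leftarrow r_1 \leftarrow r_2 \leftarrow \cdots$; instead, for (i) it supposes two distinct simple paths from $p$ to $q_0$, takes their longest common \emph{suffix} path $P'$, observes that $i(P')$ is a bpi (two arcs merge there) distinct from $p$, and notes that the prefix from $p$ to $i(P')$ avoids $q_0$, so $i(P')$ precedes $p$ in the topological order---contradiction. For (ii) it simply remarks that a cycle missing $p$ together with the path from $p$ to $q_0$ would give $q_0$ indegree at least two.

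Your approach is more constructive and actually proves more along the way: you show that every $r_i$ on the chain (including $p=r_l$) has a \emph{unique outgoing} transition, and your identification $r_l=p$ already yields the corollary that follows the lemma in the paper (that the first bpi is the same in every topological ordering). The cost is length and a few cases you handle implicitly---e.g., that the chain cannot repeat a non-$q_0$ state (else a $q_0$-avoiding cycle), and that an ``extra'' edge $r_i\to y$ cannot have $y=q_0$ or $y=r_{i-1}$ without forcing $q_0$ or $r_{i-1}$ to be a bpi. These are all fine, but the paper's common-suffix trick dispatches (i) in two lines.
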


\begin{proof} If the initial-final state $q_0$ is a bpi, then clearly $p = q_0$. In which case, the null path is the unique simple path from $p$ to $q_0$. And, since $\au$ is semi-flower, every cycle in $\au$ visits $p$. If $q_0$ is not a bpi, we proceed as follows.

\begin{enumerate}
\item[(i)] Since $p$ is coaccessible, there is a path from $p$ to the state $q_0$. Now suppose there are two different paths $P_1$ and $P_2$ with labels $u$ and $v$, respectively, from $p$ to the state $q_0$. Let $w$ be the label of longest suffix path $P'$ which is in common between the paths $P_1$ and $P_2$. As the state $q_0$ is not a bpi, $w \neq \varepsilon$. But then $i(P')$ will be a bpi different from $p$. This a contradiction to the choice of $p$. Thus, there is a unique simple path from $p$ to $q_0$.

\item[(ii)] Suppose there is a cycle that is not visiting $p$. Then the cycle contributes one to the indegree of the state $q_0$. Also, from above (i), there is a path from $p$ to the state $q_0$. This implies that the state $q_0$ is a bpi; a contradiction.
\end{enumerate}
\end{proof}

\begin{corollary}
Let $\au$ be an SFA. If $p$ is the first bpi in a topological ordering of the bpi's of $\au$, then $p$ is the first bpi in any topological ordering of the bpi's of $\au$.
\end{corollary}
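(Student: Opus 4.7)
The plan is to characterize the first bpi $p$ by a structural property of $\au'$ that does not mention the chosen ordering, and then show this property pins down a unique bpi.

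Handle the trivial case first: if $p = q_0$, then since $q_0$ is the least element of every topological ordering (Proposition \ref{top-ord}(i)), $q_0$ is automatically the first bpi in any ordering. So assume $p \ne q_0$; note this forces $q_0$ not to be a bpi, as otherwise $q_1 = q_0 = p$. Because $p$ is first in the given ordering, $p \preccurlyeq q$ for every bpi $q$, and Proposition \ref{top-ord}(ii) (whose hypothesis $j \ne 0$ is satisfied) yields that $\au'$ has no arc from $p$ to any bpi other than $q_0$. Equivalently, every outgoing arc of $p$ in $\au'$ terminates at $q_0$; call this property $(*)$.

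Next, exhibit a simple cycle in $\au'$: since $p$ is a bpi, it has indegree $\ge 2$, and combining accessibility and coaccessibility gives a cycle in $\au$, which decomposes into simple cycles passing through $q_0$ (Remark \ref{cyc-bpr}(i)), and by Lemma \ref{lem-rk}(ii) each such simple cycle visits $p$. Write any such simple cycle in $\au'$ as
\[
q_0 \xrightarrow{} x_1 \xrightarrow{} x_2 \xrightarrow{} \cdots \xrightarrow{} x_k \xrightarrow{} q_0,
\]
where each $x_i$ is a bpi distinct from $q_0$. Since the cycle visits $p$, we have $p = x_i$ for some $i$; and the arc out of $p$ along the cycle goes to $x_{i+1}$ (a bpi) if $i < k$ and to $q_0$ if $i = k$. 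Property $(*)$ forces the second alternative, so $p = x_k$, i.e.\ $p$ is the last bpi before returning to $q_0$ in every simple cycle of $\au'$.

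Now suppose $p'$ is the first bpi in some other topological ordering. If $p' = q_0$ then $q_0$ is a bpi and by the earlier case $p = q_0 = p'$. Otherwise, the same argument applied to $p'$ shows $p'$ also satisfies $(*)$ and therefore $p' = x_k$ on every simple cycle. Hence $p = x_k = p'$.

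The only place that needs any care is verifying the existence of a simple cycle through which the comparison can be made; this is the mild obstacle, but it follows immediately from the SFA structure together with $p$ being a bpi, as outlined above.
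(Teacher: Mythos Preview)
Your argument is correct. The paper states this corollary without proof, leaving it as an immediate consequence of the preceding Lemma~\ref{lem-rk}; your write-up supplies exactly the details one would expect: from Proposition~\ref{top-ord}(ii) you extract that the first bpi $p$ has all its outgoing $\au'$-arcs pointing to $q_0$, and then Lemma~\ref{lem-rk}(ii) pins $p$ down as the last bpi on every simple cycle, which is an ordering-independent characterization. This is the same circle of ideas the paper has just developed, so the approaches coincide.

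Two cosmetic remarks. First, your citation of Remark~\ref{cyc-bpr}(i) is slightly off, since that remark concerns $\au'$ while you are at that moment working in $\au$; the fact that cycles in $\au$ pass through $q_0$ is simply the definition of SFA. Second, once you have assumed $p\ne q_0$ you have already deduced that $q_0$ is not a bpi, so the sub-case ``if $p'=q_0$'' is vacuous and the sentence ``by the earlier case $p=q_0=p'$'' reads oddly (it asserts something contradicting the standing assumption). It would be cleaner to say: since $q_0$ is not a bpi, necessarily $p'\ne q_0$, and then proceed. Neither point affects the validity of the proof.
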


Let $\au$ be an SFA. Now we are ready to present our first main result of the paper on the rank of the submonoid $L(\au)$. The rank of $L(\au)$ can be characterized using the bpi's of $\au$. First note that, if there is no bpi in $\au$, then clearly the rank of $L(\au)$ is either $0$ or $1$. If $\au$ has at least one bpi, we have the following theorem.

\begin{theorem}\label{grk}
Let $\au$ be an SFA and $m \ge 1$. If $\au$ has $m$ bpi's, then\\

\hfill$rk(L(\au)) \leq \displaystyle\sum_{i=1}^m \kappa_i \overline{\kappa_{i0}},$\hfill {\rm(\#)}\\

\noindent where $\overline{\kappa_{i0}}$ is the number of simple paths from the bpi $q_i$ to the initial-final state $q_0$. The number $\overline{\kappa_{i0}}$ can be given by the recursive formula
\[\overline{\kappa_{10}} = 1\;  \mbox{ and }\;   \overline{\kappa_{i0}} = \sum_{j=1}^{i-1} \kappa_{ij} \overline{\kappa_{j0}},\; \mbox{ for }i > 1.\]
Moreover, if $\au$ is deterministic, then the equality holds in {\rm(\#)}.
\end{theorem}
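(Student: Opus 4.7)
The plan is to reduce the theorem to a combinatorial count of simple cycles in the BPR $\au'$ and then carry out that count by inducting along the topological ordering.

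First I invoke Theorem \ref{thm1} to get $rk(L(\au)) \le |Y_{\au}|$, with equality when $\au$ is deterministic (since $Y_{\au}$ is then the minimal set of generators). The cardinality $|Y_{\au}|$ is at most the number of simple cycles of $\au$, because the assignment ``cycle $\mapsto$ its label'' might fail to be injective in the non-deterministic case; in the deterministic case this assignment is injective, because two simple paths of a DFA starting at the same state are determined by their labels. By Remark \ref{cyc-bpr}(ii) the simple cycles of $\au$ are in bijection with those of $\au'$, so it is enough to show that the number of simple cycles in $\au'$ equals $\sum_{i=1}^{m} \kappa_i \overline{\kappa_{i0}}$, from which both the inequality (\#) and the equality assertion in the deterministic case follow at once.

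For this cycle count, every simple cycle of $\au'$ passes through $q_0$ by Remark \ref{cyc-bpr}(i), so it decomposes uniquely as a first arc out of $q_0$ followed by a simple path back to $q_0$. If the first arc ends at the bpi $q_i$, there are $\kappa_i$ such arcs and $\overline{\kappa_{i0}}$ completions by definition; summing over $i$ gives the target expression, provided a first arc of the form $q_0 \to q_0$ never occurs in $\au'$ when $q_0 \notin BPI(\au)$. This is the main subtlety, and I plan to argue it via the indegree structure: if $q_0$ is not a bpi then it has at most one incoming arc in $\au$, and the backward trace through non-bpi predecessors from the tail of that arc is uniquely determined; were this trace to return to $q_0$ without encountering any bpi, every cycle of $\au$ would have to be confined to that trace, so no bpi would be accessible and $m = 0$, contradicting $m \ge 1$. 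When $q_0 \in BPI(\au)$ we have $q_0 = q_1$, and loops at $q_0$ are correctly counted in the $i = 1$ term with $\overline{\kappa_{10}} = 1$ corresponding to the null path.

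For the recursion, the base case $\overline{\kappa_{10}} = 1$ follows from the null path when $q_0 = q_1$ and from Lemma \ref{lem-rk}(i) otherwise, because the unique simple path from $q_1$ to $q_0$ in $\au$ cannot contain another bpi internally (such an intermediate bpi $q_j$ would give an arc from $q_1$ to $q_j$ in $\au'$, contradicting the topological ordering of Proposition \ref{top-ord}), so it appears as a single direct arc in $\au'$. For $i > 1$, a simple path from $q_i$ to $q_0$ in $\au'$ begins with an arc from $q_i$; by the topological ordering this arc ends at some $q_j$ with $j < i$ (the case ``directly to $q_0$'' is absorbed into $j = 1$ when $q_0 = q_1$, and by the same indegree argument as above does not occur when $q_0$ is not a bpi), and the remaining simple path from $q_j$ to $q_0$ is confined to $\{q_0, q_1, \ldots, q_j\}$, automatically avoiding $q_i$, so the choices compose without double-counting. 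This yields $\overline{\kappa_{i0}} = \sum_{j=1}^{i-1} \kappa_{ij}\overline{\kappa_{j0}}$, and the hardest step I expect is the above indegree argument ruling out spurious arcs into $q_0$ when $q_0 \notin BPI(\au)$; everything else is a direct application of the topological ordering together with Lemma \ref{lem-rk}.
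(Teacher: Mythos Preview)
Your proposal is correct and follows essentially the same approach as the paper: both reduce to counting simple cycles in the BPR $\au'$ and both exploit the topological ordering to decompose cycles and paths, with your ``first arc out of $q_0$'' framing coinciding with the paper's ``highest-indexed bpi visited'' partition (since from any bpi the outgoing arcs in $\au'$ only go to lower indices). Your indegree-trace argument ruling out arcs into $q_0$ when $q_0\notin BPI(\au)$ is essentially a reproof of Lemma~\ref{lem-rk}(ii); you could shorten it by citing that lemma directly, but the argument as given is sound.
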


\begin{proof}
Let $q_1 \preccurlyeq q_2 \preccurlyeq \cdots \preccurlyeq q_m$ be the bpi's of $\au$. It is known from Theorem \ref{thm1} that \[rk(L(\au)) \le |Y_{\au}| \le |C_\au|.\] We prove the result by showing that $|C_\au|$, the number of simple cycles in $\au$ passing through the state $q_0$, is equal to the righthand side of (\#), i.e. we show that
\[|C_\au| = \sum_{i=1}^m \kappa_i \overline{\kappa_{i0}}.\]
By Remark \ref{cyc-bpr}(ii), $|C_\au| = |C_{\au'}|$, where $C_{\au'}$ is the number of simple cycles in the BPR $\au'$ of $\au$.

For $1 \le i \le m$, let $\nu_i$ be the number of simple cycles in $\au'$ that are passing through the bpi $q_i$  but not through any bpi $q_j$ with $j > i$. Clearly, \[|C_{\au'}| = \sum_{i = 1}^m \nu_i.\] We conclude the result by arguing that $\nu_i = \kappa_i \overline{\kappa_{i0}}$, for $1 \le i \le m$.

In case $i = 1$, $\nu_1$ is the number of simple cycles in $\au'$ that are passing through the bpi $q_1$ but not through any other bpi. First note that, by Lemma \ref{lem-rk} there is unique simple path from $q_1$ to $q_0$ so that $\overline{\kappa_{10}} = 1$. Now, each simple cycle that is counted in $\nu_1$ is merely an arc from $q_0$ to $q_1$ followed by the unique simple path from $q_1$ to $q_0$.  Thus, the number of simple cycles counted in $\nu_1$ is the number of arcs from $q_0$ to $q_1$, i.e. $\kappa_1$. Hence, we have
\[\nu_1 = \kappa_1 = \kappa_1\overline{\kappa_{10}}.\]

For $i > 1$, as per the topological ordering, $\nu_i$  is clearly obtained by multiplying the number of arcs from $q_0$ to the bpi $q_i$ and the number of simple paths from $q_i$ to $q_0$. That is,
\[\nu_i = \kappa_i\overline{\kappa_{i0}}\] as desired. Now, we obtain the recursive formula for $\overline{\kappa_{i0}}$. For $1 \le t < i$, let $\mu_{it}$ be the number of simple paths in $\au'$ from the bpi $q_i$ to $q_0$ that are passing through the bpi $q_t$ but not through any other bpi $q_j$ with $j > t$. Clearly, $\overline{\kappa_{i0}} = \displaystyle\sum_{t = 1}^{i-1} \mu_{it}$. But, for $1 \le t < i$, the number $\mu_{it}$ is nothing else but the product of the number of arcs from $q_{i}$ to $q_t$ and the number of simple paths from $q_t$ to $q_0$, i.e. $\mu_{it} = \kappa_{it}\overline{\kappa_{t0}}$. Hence, we have the recursive formula
\[\overline{\kappa_{i0}} = \sum_{t = 1}^{i-1}\kappa_{it}\overline{\kappa_{t0}}.\]

If $\au$ is deterministic, then by Theorem \ref{thm1}, we have
\[rk(L(\au)) = |C_\au| = \sum_{i=1}^m \kappa_i\overline{\kappa_{i0}}.\]
\end{proof}

Now, Theorem 2.10 of \cite{giam08} is an immediate corollary as stated below. We will use this corollary in one of our main results.

\begin{corollary}\label{th2.2}
If $\au = (Q, q_0, q_0, \mathcal{F})$ is an SFA with a unique bpi, then \[rk(L(\au)) \le \kappa_1 = |\mathcal{F}| - |Q| + 1.\]
Moreover, if $\au$ is deterministic, then the equality holds.
\end{corollary}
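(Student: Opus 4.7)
The claim has two parts: the bound $rk(L(\au)) \le \kappa_1$ (with equality when $\au$ is deterministic), and the numerical identity $\kappa_1 = |\mathcal{F}| - |Q| + 1$. My plan is to get the first part immediately from Theorem~\ref{grk}, and then to derive the identity by a handshake-type count on indegrees followed by a bijection argument.

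For the bound, I will specialize Theorem~\ref{grk} to $m = 1$. The base case of the recursion $\overline{\kappa_{i0}}$ forces $\overline{\kappa_{10}} = 1$, so the sum $\sum_{i=1}^{m}\kappa_i\overline{\kappa_{i0}}$ collapses to $\kappa_1$. Hence $rk(L(\au)) \le \kappa_1$, and by Theorem~\ref{grk} this is an equality whenever $\au$ is deterministic.

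For the identity, I will count edges by indegree. Because $\au$ is trim and contains at least one bpi, the SFA property guarantees that every state has indegree at least $1$: an arc entering the bpi from some tail $y$, together with coaccessibility of $y$ to $q_0$ and accessibility from $q_0$, produces a cycle that visits $q_0$. By the uniqueness hypothesis, only $q_1$ has indegree $\ge 2$; write $k = \mathrm{indeg}(q_1)$. The handshake identity then gives
\[
|\mathcal{F}| \;=\; \sum_{q \in Q} \mathrm{indeg}(q) \;=\; k + (|Q| - 1),
\]
so that $|\mathcal{F}| - |Q| + 1 = k$.

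It remains to show $\kappa_1 = k$. By Remark~\ref{cyc-bpr}(ii)--(iii), $\kappa_1$ counts the simple paths from $q_0$ to $q_1$ in $\au$ (simple cycles at $q_0$ when $q_1 = q_0$). I will send each such path to its final arc, producing a map into the set of $k$ arcs entering $q_1$, and argue that this map is a bijection. Surjectivity is straightforward: any arc $(y, a, q_1)$ can be prepended by a suitable walk from $q_0$ to $y$ and trimmed to simplicity. The main obstacle is injectivity, which amounts to showing that the preceding walk is forced. I would trace backward from $y$ through its unique predecessor at each step (valid because every non-bpi state has indegree $1$); this chain cannot re-enter $q_1$, for then the closed walk $q_1 \to \cdots \to y \to q_1$ would be a cycle avoiding $q_0$, contradicting the SFA hypothesis. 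Hence the chain terminates at $q_0$, producing the unique simple path whose last arc is $(y, a, q_1)$; the case $q_1 = q_0$ is handled uniformly, since the chain stops as soon as it reaches $q_0 = q_1$. Combining the three steps yields $\kappa_1 = k = |\mathcal{F}| - |Q| + 1$, and the rank equality in the deterministic case is supplied by Theorem~\ref{grk}.
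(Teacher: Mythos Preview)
Your proof is correct and follows the route the paper intends: the paper states the corollary as ``immediate'' after Theorem~\ref{grk}, and your specialisation to $m=1$ together with a handshake count is exactly how one unpacks that. The only redundancy is the bijection argument showing that $\kappa_1$ equals the indegree of $q_1$: this is precisely the content of Remark~\ref{indeg-bpi} in the case $m=1$ (the sum $\sum_{i=2}^{1}\kappa_{i1}$ is empty, so the indegree of $q_1$ is $\kappa_1$), so you could cite that remark and skip the back-tracing construction entirely.
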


\begin{example}\label{2bpi-ctr}
Let us consider the topological ordering \[p_2 \preccurlyeq p_4 \preccurlyeq p_3 \preccurlyeq p_1\] of the bpi's of the SFA $\au$ given in \textsc{Figure} \ref{fig3}. That is, $q_1 = p_2$, $q_2 = p_4$, $q_3 = p_3$ and $q_4 = p_1$.  Accordingly, $\kappa_1 = 1, \kappa_2 = 1, \kappa_3 = 2$ and $\kappa_4 = 3$. Also,  $\kappa_{41} = 1, \kappa_{42} = 0, \kappa_{43} = 0$, $\kappa_{31} = 1, \kappa_{32} = 2$ and $\kappa_{21} =1$. Since $\au$ is deterministic, we have
\begin{eqnarray*}
rk(L(\au)) &=& \kappa_1 + \kappa_2(\kappa_{21}) + \kappa_3(\kappa_{31} + \kappa_{32}\kappa_{21}) +\\
&& \kappa_4(\kappa_{41} + \kappa_{42}\kappa_{21} + \kappa_{43}\kappa_{31}+ \kappa_{43}\kappa_{32}\kappa_{21})\\
&=& 11.
\end{eqnarray*}
\end{example}

\subsection{Hanna Neumann Property}

In this subsection, we obtain a sufficient condition for HNP of two submonoids which are accepted by deterministic SFA with a unique bpi. The following lemma is useful in obtaining the proposed result.

\begin{lemma}
Let $\au  = (Q, q_0, q_0,\mathcal{F})$ be an SFA and $m \ge 1$. If $\au$ has $m$ bpi's, then
\[|\mathcal{F}|-|Q|+1 \geq rk(L(\au)) - \sum_{i=2}^m \left((\kappa_{i}-1)(\kappa_{i1} - 1) + \sum_{j=2}^{i-1} \kappa_{ij}( \kappa_i \overline{\kappa_{j0}} - 1)\right).\]
Moreover, if $\au$ is deterministic, then the equality holds.
\end{lemma}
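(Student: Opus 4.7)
The strategy is to rewrite both sides of the inequality in terms of the combinatorial data of the BPR $\au'=(Q',q_0,q_0,\mathcal{F}')$ of $\au$, invoke Theorem \ref{grk}, and finish with a short algebraic identity.

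\emph{Step 1 (contracting to the BPR preserves $|\mathcal{F}|-|Q|$).} Every state $w\in Q\setminus Q'$ is not a bpi, so it has indegree exactly $1$ with a unique incoming arc $p\xrightarrow{a}w$; by trimness it also has outdegree $k\ge 1$, with outgoing arcs $w\xrightarrow{b_t}r_t$, $1\le t\le k$. Removing $w$ and replacing its $k+1$ incident arcs by the $k$ concatenated arcs $p\xrightarrow{ab_t}r_t$ eliminates one vertex and one arc (net), and the semi-flower property prevents $w$ from lying on a self-cycle. Iterating this contraction produces $\au'$ without altering $|\mathcal{F}|-|Q|$, so $|\mathcal{F}|-|Q|=|\mathcal{F}'|-|Q'|$.

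\emph{Step 2 (counting the arcs of $\au'$).} I enumerate $\mathcal{F}'$ by source, using the conventions for $\kappa_i$ and $\kappa_{ij}$ together with the topological ordering. Arcs leaving $q_0$ number $\sum_{i=1}^m \kappa_i$ (when $q_0=q_1$, the summand $\kappa_1$ accounts for self-loops at $q_0$), and arcs leaving a bpi $q_i$ with $i\ge 2$ number $\sum_{j=1}^{i-1}\kappa_{ij}$. The only arcs unaccounted for are those entering $q_0$ from some bpi when $q_0\ne q_1$; in that case $q_0$ has indegree exactly $1$ in $\au$, and Lemma~\ref{lem-rk}(ii) forbids any simple cycle avoiding $q_1$, hence any self-loop at $q_0$ in $\au'$, so precisely one extra arc remains. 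Using $|Q'|=m$ when $q_0=q_1$ and $|Q'|=m+1$ otherwise, both cases collapse to
\[
|\mathcal{F}|-|Q|+1 \;=\; \sum_{i=1}^m \kappa_i \;+\; \sum_{i=2}^m\sum_{j=1}^{i-1} \kappa_{ij} \;-\; (m-1).
\]

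\emph{Step 3 (algebraic identity and conclusion).} Denote by $S$ the sum subtracted in the statement. Expanding each term, using $\overline{\kappa_{10}}=1$ to rewrite $\kappa_{i1}$ as $\kappa_{i1}\overline{\kappa_{10}}$, and invoking the recursion $\overline{\kappa_{i0}}=\sum_{j=1}^{i-1}\kappa_{ij}\overline{\kappa_{j0}}$ from Theorem~\ref{grk}, a short calculation gives
\[
\sum_{i=1}^m \kappa_i\overline{\kappa_{i0}} \;-\; S \;=\; \sum_{i=1}^m \kappa_i \;+\; \sum_{i=2}^m\sum_{j=1}^{i-1}\kappa_{ij} \;-\; (m-1),
\]
which by Step 2 equals $|\mathcal{F}|-|Q|+1$. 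Since Theorem~\ref{grk} gives $rk(L(\au))\le\sum_{i=1}^m \kappa_i\overline{\kappa_{i0}}$, with equality when $\au$ is deterministic, the lemma follows.

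The main obstacle is a clean, case-free treatment of the two possibilities $q_0=q_1$ and $q_0\ne q_1$; the key observation that dissolves it is that the conventions $\kappa_1=$(self-loops at $q_0$) and $\kappa_{i1}=$(arcs from $q_i$ to $q_0$) in the first case play exactly the role of the one ``extra'' arc into $q_0$ in the second, so both cases yield the identity of Step 2 and the remainder is routine bookkeeping.
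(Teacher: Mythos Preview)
Your proof is correct and follows essentially the same route as the paper's. Both arguments reduce to the identity
\[
|\mathcal{F}|-|Q|+1 \;=\; \sum_{i=1}^m \kappa_i + \sum_{i=2}^m\sum_{j=1}^{i-1}\kappa_{ij} - (m-1)
\]
and then combine it with the rank formula of Theorem~\ref{grk} via the recursion $\overline{\kappa_{i0}}=\sum_{j=1}^{i-1}\kappa_{ij}\overline{\kappa_{j0}}$. The only organizational difference is how that identity is obtained: the paper sums indegrees directly in $\au$ using Remark~\ref{indeg-bpi}, whereas you first pass to the BPR by your contraction argument (which preserves $|\mathcal{F}|-|Q|$) and count arcs there, handling the cases $q_0=q_1$ and $q_0\ne q_1$ separately. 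Your packaging of the algebra in Step~3 is somewhat tighter than the paper's long chain of equalities, but the underlying computation is the same.
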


\begin{proof}
Since the number of transitions $|\mathcal{F}|$ of $\au$ is the total indegree (i.e. the sum of indegrees of all the states) of the digraph of $\au$, by Remark \ref{indeg-bpi},
we have
\[|\mathcal{F}|=|Q|-m + \sum_{j=1}^m \left( \kappa_j + \sum_{i=j+1}^m \kappa_{ij}\right).\]
Consequently,
\begin{eqnarray*}
|\mathcal{F}|-|Q|+1 &=& \kappa_1 + \sum_{j=2}^m \left(\kappa_{j}-1\right) +  \sum_{j=1}^m\sum_{i=j+1}^m \kappa_{ij}\\
&=& \kappa_1 + \sum_{i=2}^m \kappa_i \overline{\kappa_{i0}} + \sum_{j=2}^m \left(\kappa_{j}-1\right) +  \sum_{j=1}^m\sum_{i=j+1}^m \kappa_{ij} - \sum_{i=2}^m \kappa_i \overline{\kappa_{i0}}.
\end{eqnarray*}

Now, by Theorem \ref{grk} and simple algebraic manipulations, we have\\

\noindent $|\mathcal{F}|-|Q|+1$
\begin{eqnarray*}
&\geq & rk(L(\au)) + \sum_{j=2}^m \left(\kappa_{j}-1\right) +  \sum_{j=1}^m\sum_{i=j+1}^m \kappa_{ij} - \sum_{i=2}^m \kappa_i \overline{\kappa_{i0}}\\
&=& rk(L(\au)) + \sum_{j=2}^m \left(\kappa_{j}-1\right) +  \sum_{j=1}^m\sum_{i=j+1}^m \kappa_{ij} - \sum_{i=2}^m \kappa_i \left(\sum_{j=1}^{i-1} \kappa_{ij} \overline{\kappa_{j0}}\right)\\
&=& rk(L(\au)) + \sum_{j=2}^m \left(\kappa_{j}-1\right) +  \sum_{j=1}^m\sum_{i=j+1}^m \kappa_{ij} - \sum_{i=2}^m \kappa_i \left(\kappa_{i1} + \sum_{j=2}^{i-1} \kappa_{ij} \overline{\kappa_{j0}}\right)\\
&=& rk(L(\au)) + \sum_{j=2}^m \left(\kappa_{j}-1\right) +  \sum_{j=1}^m\sum_{i=j+1}^m \kappa_{ij} - \sum_{i=2}^m \kappa_i \kappa_{i1} - \sum_{i=2}^m \sum_{j=2}^{i-1} \kappa_i \kappa_{ij} \overline{\kappa_{j0}}\\
&=& rk(L(\au)) + \sum_{j=2}^m \left(\kappa_{j}-1\right) + \sum_{i=2}^m \kappa_{i1}+ \sum_{j=2}^m\sum_{i=j+1}^m \kappa_{ij} - \sum_{i=2}^m \kappa_i \kappa_{i1} - \sum_{i=2}^m \sum_{j=2}^{i-1} \kappa_i \kappa_{ij} \overline{\kappa_{j0}}\\
&=& rk(L(\au)) + \sum_{j=2}^m \left(\kappa_{j}-1\right) - \sum_{i=2}^m \kappa_{i1}(\kappa_i-1) + \sum_{j=2}^m\sum_{i=j+1}^m \kappa_{ij}  - \sum_{i=2}^m \sum_{j=2}^{i-1} \kappa_i \kappa_{ij} \overline{\kappa_{j0}}\\
&=& rk(L(\au)) - \sum_{i=2}^m \left(\kappa_{i}-1\right) (\kappa_{i1} - 1) + \sum_{j=2}^m\sum_{i=j+1}^m \kappa_{ij}  - \sum_{i=2}^m \sum_{j=2}^{i-1} \kappa_i \kappa_{ij} \overline{\kappa_{j0}}\\
&=& rk(L(\au)) - \sum_{i=2}^m \left(\kappa_{i}-1\right) (\kappa_{i1} - 1) + \sum_{j=2}^m\sum_{i=2}^m \kappa_{ij}  - \sum_{i=2}^m \sum_{j=2}^{m} \kappa_i \kappa_{ij} \overline{\kappa_{j0}},\\
&&\mbox{as $\kappa_{ij} = 0$ for all $j \ge i > 1$}\\
&=& rk(L(\au)) - \sum_{i=2}^m \left(\kappa_{i}-1\right) (\kappa_{i1} - 1) - \sum_{i=2}^m\sum_{j=2}^m \kappa_{ij}( \kappa_i \overline{\kappa_{j0}} - 1)\\
&=& rk(L(\au)) - \sum_{i=2}^m \left(\kappa_{i}-1\right) (\kappa_{i1} - 1) - \sum_{i=2}^m\sum_{j=2}^{i-1} \kappa_{ij}( \kappa_i \overline{\kappa_{j0}} - 1)\\
&&\mbox{as $\kappa_{ij} = 0$ for all $j \ge i > 1$}.
\end{eqnarray*}
Thus,
\[|\mathcal{F}|-|Q|+1 \geq rk(L(\au)) - \sum_{i=2}^m \left((\kappa_{i}-1)(\kappa_{i1} - 1) + \sum_{j=2}^{i-1} \kappa_{ij}( \kappa_i \overline{\kappa_{j0}} - 1)\right).\]
\end{proof}

Now, by Proposition \ref{result-1}, we have the following corollary.

\begin{corollary}\label{cor3.4}
Let $A$ be an alphabet of cardinality $n$ and let $\au$ be a deterministic SFA  over $A$. For $m \ge 1$, if $\au$ has $m$ bpi's, then
\begin{eqnarray*}
rk(L(\au)) &=& \sum_{i=2}^m \left(\left(\kappa_i-1\right)\left(\kappa_{i1}-1\right) +\sum_{j=2}^{i-1} \kappa_{ij}( \kappa_i \overline{\kappa_{j0}} - 1)\right)\\&&
+ \displaystyle\sum_{t = 2}^n |BPO_t(\au)|(t - 1) + 1.
\end{eqnarray*}
\end{corollary}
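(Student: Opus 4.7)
The plan is to observe that the corollary follows by simply combining two already-established equalities: the equality version of the preceding lemma (which holds because $\au$ is deterministic) and Proposition \ref{result-1}. No new combinatorial argument is required; the content of the corollary is essentially the substitution $|\mathcal{F}|-|Q|+1 = \sum_{t=2}^n |BPO_t(\au)|(t-1) + 1$ into the rank identity.

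First I would invoke the preceding lemma. Since $\au$ is a deterministic SFA with $m \ge 1$ bpi's, the lemma yields the exact identity
\[
|\mathcal{F}|-|Q|+1 \;=\; rk(L(\au)) \;-\; \sum_{i=2}^m \left((\kappa_{i}-1)(\kappa_{i1} - 1) + \sum_{j=2}^{i-1} \kappa_{ij}( \kappa_i \overline{\kappa_{j0}} - 1)\right).
\]
Next I would apply Proposition \ref{result-1}, which, because $\au$ is a deterministic SFA over an alphabet of cardinality $n$, gives $|\mathcal{F}|-|Q| = \sum_{t=2}^n |BPO_t(\au)|(t-1)$, so that
\[
|\mathcal{F}|-|Q|+1 \;=\; \sum_{t=2}^n |BPO_t(\au)|(t-1) + 1.
\]
Finally I would equate the right-hand sides of these two expressions for $|\mathcal{F}|-|Q|+1$ and solve for $rk(L(\au))$, which produces exactly the stated formula.

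Since both ingredients hold as strict equalities in the deterministic case, there is no inequality to tighten and no obstacle of substance; the only thing to be careful about is bookkeeping, specifically that the hypotheses ``deterministic'' and ``$m \ge 1$'' needed for the lemma are satisfied here, and that Proposition \ref{result-1} is applied to $\au$ rather than a derived automaton. The proof is therefore a one-line substitution.
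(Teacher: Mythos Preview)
Your proposal is correct and matches the paper's own approach exactly: the paper simply states that the corollary follows from the preceding lemma by Proposition~\ref{result-1}, which is precisely the substitution $|\mathcal{F}|-|Q|+1 = \sum_{t=2}^n |BPO_t(\au)|(t-1) + 1$ into the equality case of the lemma that you describe.
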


\begin{theorem}\label{ghn}
Let $\au_H$ and $\au_K$ be deterministic SFA over $A$ each with a unique bpi accepting submonoids $H$ and $K$, respectively. For $m \ge 1$, if the automaton $(\au_H  \times \au_K)^T$ is an SFA with $m$ bpi's, say $q_1, q_2, \ldots, q_m$ considered in a topological ordering, then
\begin{eqnarray*}
\widetilde{rk}(H \cap K) &\le & \sum_{i = 2}^m \left((\kappa_i - 1)(\kappa_{i1} - 1)  + \sum_{j = 2}^{i-1}
\kappa_{ij}\left(\kappa_i\overline{\kappa_{j0}}  - 1\right)\right) +  \widetilde{rk}(H)\widetilde{rk}(K),
\end{eqnarray*}
where $\kappa_i$ is the number of arcs from the initial-final state to $q_i$ and $\kappa_{ij}$ is the number of arcs from $q_i$ to $q_j$
in the BPR of $(\au_H \times \au_K)^T$.
\end{theorem}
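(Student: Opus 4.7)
The plan is to apply the equality form of the preceding lemma to the automaton $\au_{HK} := (\au_H \times \au_K)^T$ to pull $rk(H \cap K)$ out as $|\mathcal{F}_{HK}| - |Q_{HK}| + 1$ plus the bpi-correction sum, and then to bound $|\mathcal{F}_{HK}| - |Q_{HK}|$ above by $\widetilde{rk}(H)\widetilde{rk}(K)$ by combining Propositions \ref{result-1}, \ref{result-2} and \ref{result-3}.

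First, I would verify that $\au_{HK}$ is a deterministic SFA with $m$ bpi's accepting $H \cap K$: determinism is preserved by products and by trimming; the automaton is trim by construction with unique initial-final state $(q_0^H, q_0^K)$; and every cycle of $\au_{HK}$ projects to cycles of $\au_H$ and $\au_K$, each of which must visit the corresponding initial-final state, so all cycles visit $(q_0^H, q_0^K)$. The equality assertion of the preceding lemma applied to $\au_{HK}$ then yields
\[|\mathcal{F}_{HK}| - |Q_{HK}| + 1 \;=\; rk(H \cap K) \;-\; \sum_{i=2}^m \Bigl((\kappa_i-1)(\kappa_{i1}-1) + \sum_{j=2}^{i-1} \kappa_{ij}(\kappa_i\overline{\kappa_{j0}} - 1)\Bigr).\]

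The crux is the inequality $|\mathcal{F}_{HK}| - |Q_{HK}| \le \widetilde{rk}(H)\widetilde{rk}(K)$. Trimming only deletes outgoing transitions from surviving states, so for each $q \in Q_{HK}$ one has $\mathrm{outdeg}_{\au_{HK}}(q) \le \mathrm{outdeg}_{\au_H \times \au_K}(q)$. Summing $(\mathrm{outdeg}(q)-1)_+$ over the states of $\au_{HK}$ and then over the (generally larger) state set of $\au_H \times \au_K$ gives
\[\sum_{t=2}^n (t-1)\,|BPO_t(\au_{HK})| \;\le\; \sum_{t=2}^n (t-1)\,|BPO_t(\au_H \times \au_K)|.\]
Proposition \ref{result-1} applied to the deterministic SFA $\au_{HK}$ identifies the left side with $|\mathcal{F}_{HK}| - |Q_{HK}|$. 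Proposition \ref{result-2} replaces $|BPO_t(\au_H \times \au_K)|$ by $\bigl(\sum_{r\ge t} c_r\bigr)\bigl(\sum_{s\ge t} d_s\bigr)$ with $c_r = |BPO_r(\au_H)|$ and $d_s = |BPO_s(\au_K)|$, and Proposition \ref{result-3} bounds the resulting double sum by $\bigl(\sum_{i}(i-1)c_i\bigr)\bigl(\sum_{j}(j-1)d_j\bigr)$. Since $\au_H$ has a unique bpi, Corollary \ref{th2.2} together with Proposition \ref{result-1} gives $\widetilde{rk}(H) = rk(H) - 1 = |\mathcal{F}_H| - |Q_H| = \sum_{i=2}^n (i-1)c_i$, and analogously for $K$; thus the chain of inequalities closes with $\widetilde{rk}(H)\widetilde{rk}(K)$.

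Combining the two displayed relations and subtracting $1$ produces the stated bound on $\widetilde{rk}(H \cap K)$, with the degenerate case $rk(H \cap K) = 0$ being immediate from non-negativity of the bpi-correction sum. The most delicate point I anticipate is precisely the outdegree-reducing step: since $\au_H \times \au_K$ need not be trim (and hence need not be an SFA), Proposition \ref{result-1} cannot be invoked for it directly, so the passage from $\au_{HK}$ to the product has to be argued state by state on outdegrees before Propositions \ref{result-2} and \ref{result-3} can be brought to bear.
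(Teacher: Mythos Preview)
Your proof is correct and follows essentially the same route as the paper: apply Corollary~\ref{cor3.4} (equivalently, the equality case of the preceding lemma combined with Proposition~\ref{result-1}) to the deterministic SFA $(\au_H\times\au_K)^T$, then chain Propositions~\ref{result-2} and~\ref{result-3} and close with Corollary~\ref{th2.2} plus Proposition~\ref{result-1} to identify the factors $\widetilde{rk}(H)$ and $\widetilde{rk}(K)$. In fact you are more careful than the paper on one point: the paper silently writes $|BPO_t(\au_H\times\au_K)|$ where Corollary~\ref{cor3.4} actually yields $|BPO_t((\au_H\times\au_K)^T)|$, and your outdegree-monotonicity argument under trimming is exactly the missing justification for that passage.
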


\begin{proof}
Let $A$ be an alphabet of cardinality $n$. For $m \ge 1$,  note that
\begin{eqnarray*}
\widetilde{rk}(H\cap K) &=& rk(L(\au_H  \times \au_K))  - 1\\
&=& \sum_{i=2}^m \left(\left(\kappa_i-1\right)\left(\kappa_{i1}-1\right) +\sum_{j=2}^{i-1} \kappa_{ij}\left(\kappa_i \overline{\kappa_{j0}}   -1\right)\right)\\&&
+ \displaystyle\sum_{t = 2}^n |BPO_t(\au_H  \times \au_K)|(t - 1) \; \mbox{ by Corollary \ref{cor3.4}}\\
&\leq & \sum_{i=2}^m \left(\left(\kappa_i-1\right)\left(\kappa_{i1}-1\right) +\sum_{j=2}^{i-1} \kappa_{ij}\left(\kappa_i \overline{\kappa_{j0}} -1\right)\right)\\&&
+ \sum_{t = 2}^n (t - 1)\left(\sum_{t\leq r, s\leq n}c_rd_s \right)\; \mbox{ by Proposition \ref{result-2}},
\end{eqnarray*}
where $c_r = |BPO_r(\au_H)|$ and $d_s = |BPO_s(\au_K)|$. Further, by Proposition \ref{result-3}, we have

\begin{eqnarray*}
\widetilde{rk}(H\cap K) &\leq & \sum_{i=2}^m \left(\left(\kappa_i-1\right)\left(\kappa_{i1}-1\right) +\sum_{j=2}^{i-1} \kappa_{ij}\left(\kappa_i \overline{\kappa_{j0}} -1\right)\right)\\&&+ \left(\sum_{i = 2}^n (i - 1)c_i\right)\left(\sum_{j = 2}^n (j-1)d_j\right)\\
& = & \sum_{i=2}^m \left(\left(\kappa_i-1\right)\left(\kappa_{i1}-1\right) +\sum_{j=2}^{i-1} \kappa_{ij}\left(\kappa_i \overline{\kappa_{j0}} -1\right)\right) + \widetilde{rk}(H)\widetilde{rk}(K)
\\&& \mbox{by Corollary \ref{th2.2} and Proposition \ref{result-1}.}
\end{eqnarray*}
Hence the result.
\end{proof}

We now state a sufficient condition for Hanna Neumann property of the submonoids under consideration.
\begin{corollary}
In addition to the hypothesis of Theorem \ref{ghn}, if there is no path between any two bpi's $q_i$ and $q_j$, for $i, j > 1$, and there is a unique simple path from each bpi to $q_1$ in the automaton $(\au_H  \times \au_K)^T$, then \[\widetilde{rk}(H \cap K) \le \widetilde{rk}(H)\widetilde{rk}(K).\]
\end{corollary}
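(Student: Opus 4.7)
The plan is to apply Theorem \ref{ghn} directly and to argue that, under the two extra hypotheses, the correction term
\[
E \;:=\; \sum_{i=2}^m\!\left((\kappa_i-1)(\kappa_{i1}-1)+\sum_{j=2}^{i-1}\kappa_{ij}\bigl(\kappa_i\overline{\kappa_{j0}}-1\bigr)\right)
\]
vanishes, so that the bound supplied by that theorem collapses to $\widetilde{rk}(H\cap K)\le \widetilde{rk}(H)\widetilde{rk}(K)$.

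First I would translate the hypothesis ``no path between any two bpi's $q_i,q_j$ for $i,j>1$'' into a statement about the BPR of $(\au_H\times\au_K)^T$. Every arc from $q_i$ to $q_j$ in the BPR corresponds to a simple path in $(\au_H\times\au_K)^T$ from $q_i$ to $q_j$ whose intermediate vertices lie outside $BPI\cup\{q_0\}$; in particular it is a path between the two bpi's, so the hypothesis forces $\kappa_{ij}=0$ whenever $i,j\ge 2$. This already kills the inner sum $\sum_{j=2}^{i-1}\kappa_{ij}(\kappa_i\overline{\kappa_{j0}}-1)$ for every $i\ge 2$.

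Next I would exploit the second hypothesis to pin down $\kappa_{i1}=1$ for every $i>1$. Fix such an $i$ and look at the arcs leaving $q_i$ in the BPR. By the topological ordering each must terminate at $q_0$ or at some $q_j$ with $j<i$. Direct arcs $q_i\to q_0$ are in fact impossible: concatenating such an arc with any arc $q_0\to q_i$ would yield a cycle in $(\au_H\times\au_K)^T$ that avoids $q_1$ (arcs of the BPR traverse only non-bpi intermediate states, and $q_1$ is itself a bpi), contradicting Lemma \ref{lem-rk}(ii). Combined with the first translation, every arc out of $q_i$ must therefore land on $q_1$, so the number of simple paths $q_i\to q_1$ is exactly $\kappa_{i1}$. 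The uniqueness hypothesis then forces $\kappa_{i1}=1$, whence $(\kappa_i-1)(\kappa_{i1}-1)=0$ for all $i\ge 2$.

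Putting these two reductions together gives $E=0$, and the inequality of Theorem \ref{ghn} becomes $\widetilde{rk}(H\cap K)\le \widetilde{rk}(H)\widetilde{rk}(K)$, which is the Hanna Neumann property. The only non-routine step is the elimination of direct arcs $q_i\to q_0$ in the BPR for $i>1$; once that structural observation is in place, the remainder is a one-line substitution into Theorem \ref{ghn}.
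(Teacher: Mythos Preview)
Your proposal is correct and follows essentially the same route as the paper: both arguments show the correction term in Theorem \ref{ghn} vanishes by deducing $\kappa_{ij}=0$ for $i,j>1$ from the first hypothesis and $\kappa_{i1}=1$ from the second, then invoke the theorem. Your version is in fact slightly more careful than the paper's, since you explicitly rule out BPR arcs $q_i\to q_0$ (via Lemma \ref{lem-rk}(ii)) before concluding that every simple path $q_i\to q_1$ is counted by $\kappa_{i1}$; the paper's proof simply asserts that the unique path ``cannot pass through any other bpi'' and jumps to $\kappa_{i1}=1$ without addressing the possibility that it passes through $q_0$.
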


\begin{proof}
For $i, j > 1$, if there is no path between the bpi's $q_i$ and $q_j$, then $\kappa_{ij} = 0$. Further, for $i \geq 2$, if there is a unique simple path from each bpi $q_i$ to $q_1$, then the path cannot pass through any other bpi. Thus, we have $\kappa_{i1} = 1$ so that
\[\sum_{i=2}^m \left(\left(\kappa_i-1\right)\left(\kappa_{i1}-1\right) +\sum_{j=2}^{i-1} \kappa_{ij}\left(\kappa_i \overline{\kappa_{j0}} -1\right)\right)=0.\]
Hence, by Theorem \ref{ghn}, \[\widetilde{rk}(H \cap K) \le \widetilde{rk}(H)\widetilde{rk}(K).\]
\end{proof}

\section{Conclusion}

This work considers the intersection problem of two submonoids of a free monoid which are generated by finite prefix sets. In particular, this work has obtained a sufficient condition for Hanna Neumann property for the class of submonoids generated by finite prefix sets. In that connection, a general rank for the submonoids which are accepted by semi-flower automata is also obtained. Thus, this work addresses one of the problems, viz. the prefix case, posed by Giambruno and Restivo in the conclusions of the paper \cite{giam08}. However, there is a lot more to investigate on the general problem concerning the intersection of two arbitrary submonoids of a free monoid. For instance, even in the prefix case, one could investigate on the necessary and sufficient conditions for Hanna Neumann property.  On the other hand, the intersection problem of two submonoids generated by finite non-prefix sets of words is of particular interest. For this problem, the rank formula that is obtained (for nondeterministic automata) in this paper may be useful.

\section*{Acknowledgements}

The authors are very much thankful to anonymous referees for their valuable comments which improved the manuscript; particularly, for pointing out a mistake in Theorem 2.13.


\begin{thebibliography}{1}

\bibitem{bers85}
J.~Berstel and D.~Perrin.
\newblock {\em Theory of Codes}, volume 117 of {\em Pure and Applied
  Mathematics}.
\newblock Academic Press Inc., 1985.

\bibitem{cor01}
T.~H. Cormen, C.~E. Leiserson, R.~L. Rivest, and C.~Stein.
\newblock {\em Introduction to Algorithms}.
\newblock MIT Press, second edition, 2001.

\bibitem{fried-note}
J. Friedman.
\newblock Linear algebra and the Hanna Neumann Conjecture,
\newblock {\em Preprint}, 2011.

\bibitem{fried-arxiv}
J. Friedman.
\newblock Sheaves on graphs, their homological invariants, and a proof of the Hanna Neumann conjecture,
\newblock {\em Preprint}, 2011. arXiv:1105.0129v2.

\bibitem{giam07}
L.~Giambruno.
\newblock {\em Automata-theoretic Methods in Free Monoids and Free Groups}.
\newblock PhD thesis, Università degli Studi di Palermo, Palermo, Italy, 2007.

\bibitem{giam08}
L.~Giambruno and A.~Restivo.
\newblock An automata-theoretic approach to the study of the intersection of
  two submonoids of a free monoid.
\newblock {\em RAIRO -- Theor. Inform. Appl.}, 42(3): 503--524, 2008.

\bibitem{howson54}
A.~G. Howson.
\newblock On the intersection of finitely generated free groups.
\newblock {\em J. London Math. Soc.}, 29: 428--434, 1954.

\bibitem{karhumaki84}
J. Karhum\"aki.
\newblock A note on intersection of free submonoids of a free monoid.
\newblock {\em Semigroup Forum}, 29: 183--205, 1984.


\bibitem{meakin02}
J. Meakin and P. Weil.
\newblock  Subgroups of free groups: a contribution to the Hanna Neumann conjecture,
\newblock {\em Geom. Dedicata}, 94: 33--43,  2002.

\bibitem{min11}
I. Mineyev.
\newblock  Groups, graphs, and the Hanna Neumann Conjecture,
\newblock {\em Preprint}, 2011.

\bibitem{min12}
I. Mineyev.
\newblock  Submultiplicativity and the Hanna Neumann Conjecture,
\newblock {\em Ann. of Math.}, 175: 393--414, 2012.

\bibitem{neu56}
H.~Neumann.
\newblock On the intersection of finitely generated subgroups of free groups.
\newblock {\em Publ. Math. Debrecen}, 4: 186--189, 1956.

\bibitem{neu90}
W.~D. Neumann.
\newblock On intersections of finitely generated subgroups of free groups,
\newblock {\em Lecture Notes in Math.}, 1456: 161--170, 1990.

\bibitem{singh11a}
S.~N. Singh and K.~V. Krishna.
\newblock On the rank of the intersection of two submonoids of a free monoid.
\newblock Contributed talk at \emph{A$^3$: Abstract Algebra and Algorithms
  Conference}, Eger, Hungary, August 14-17, 2011.

\bibitem{singh11}
S.~N. Singh and K.~V. Krishna.
\newblock The rank and {H}anna {N}eumann property of some submonoids of a free
  monoid.
\newblock {\em Ann. Math. Inform.}, To appear,  2011.

\bibitem{til72}
B.~Tilson.
\newblock The intersection of free submonoids of a free monoid is free.
\newblock {\em Semigroup Forum}, 4: 345--350, 1972.

\end{thebibliography}
\end{document}